\newtheorem{theorem}{Theorem}
\newtheorem{lemma}{Lemma}
\newtheorem{proposition}{Proposition}
\theoremstyle{definition}
\newtheorem{definition}{Definition}
\newtheorem{remark}{Remark}
\newtheorem{example}{Example}
\newcommand{\argmin}{\operatornamewithlimits{arg\,min}}
\newcommand{\R}{\mathbb{R}}
\newcommand{\C}{\mathcal{C}}
\newcommand{\T}{^\top}
\newcommand{\K}{\mathcal{K}}
\newcommand{\Kinf}{\mathcal{K}_{\infty}}
\newcommand{\relu}{\mathrm{ReLU}}
\newcommand{\QP}{\mathrm{QP}}
\newcommand{\map}[3]{#1:#2 \rightarrow #3}
\newcommand{\bff}{\mathbf{f}}
\newcommand{\bg}{\mathbf{g}}
\newcommand{\bk}{\mathbf{k}}
\newcommand{\bu}{\mathbf{u}}
\newcommand{\bx}{\mathbf{x}}
\newcommand{\Lf}{L_{\bff}}
\newcommand{\Lg}{L_{\bg}}
\newcommand{\LfV}{\Lf V}
\newcommand{\LgV}{\Lg V}
\newcommand{\Lfh}{\Lf h}
\newcommand{\Lgh}{\Lg h}
\title{\textbf{Characterizing Smooth Safety Filters via the Implicit Function Theorem}}
\author{Max H. Cohen, Pio Ong, Gilbert Bahati, and Aaron D. Ames %
\thanks{This research was supported by NSF Award \# 1932091.}
\thanks{The authors are with the Department of Mechanical and Civil Engineering, California Institute of Technology, Pasadena, CA 91125; \texttt{\{maxcohen,pioong,gbahati,ames\}@caltech.edu}}
}
\begin{document}
    \maketitle

    \begin{abstract}
        Optimization-based safety filters, such as control barrier function (CBF) based quadratic programs (QPs), have demonstrated success in controlling autonomous systems to achieve complex goals. These CBF-QPs can be shown to be continuous, but are generally not smooth, let alone continuously differentiable. In this paper, we present a general characterization of smooth safety filters -- smooth controllers that guarantee safety in a minimally invasive fashion -- based on the Implicit Function Theorem. This characterization leads to families of smooth universal formulas for safety-critical controllers that quantify the conservatism of the resulting safety filter, the utility of which is demonstrated through illustrative examples.
    \end{abstract}

    \section{Introduction}   
    Over the past decade, control barrier functions (CBFs) \cite{AmesTAC17} have proven to be a powerful tool for designing controllers enforcing safety on nonlinear systems. The properties of CBFs naturally lead to their use as \emph{safety filters} for nominal controllers that may not have been designed a priori to ensure safety. Most often, such safety filters are instantiated via optimization problems -- typically a quadratic program (QP) -- to minimize the deviation from a nominal controller while satisfying Lyapunov-like conditions that ensure forward invariance of a designated safe set \cite{AmesECC19,WeiBook,PanagouAutomatica23}. Under certain regularity conditions, these optimization-based safety filters are locally Lipschitz functions of the system state \cite{JankovicAutomatica18,CortesArXiV23}, allowing one to leverage set-theoretic tools, such as Nagumo's Theorem \cite{Blanchini}, to conclude forward invariance of safe sets. Although such controllers are pointwise optimal, they are typically not smooth even if the problem data is.

    \begin{figure}
        \centering
        \includegraphics[width=0.48\textwidth]{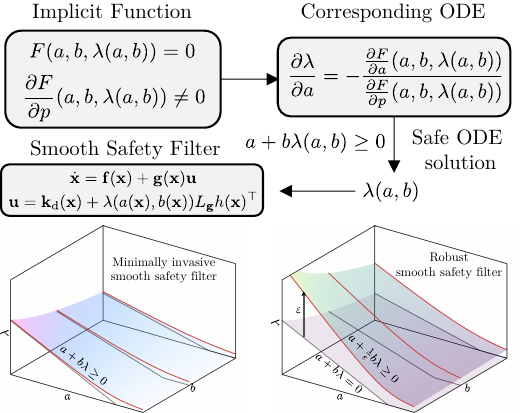}
        \caption{Illustration of methodology for generating smooth safety filters.}
        \label{fig:main_figure}\vspace{-0.5cm}
    \end{figure}

    The lack of smoothness exhibited by optimization-based safety filters has not been overly detrimental to the development of safety-critical controllers to date; however, more recent developments in the literature motivate the consideration of smooth (or, at least, sufficiently differentiable) safety filters. For example, \cite{TaylorCDC22} recently proposed a barrier-backstepping methodology that enables the systematic construction of CBFs for systems in strict-feedback form. As in Lyapunov backstepping \cite{Krstic}, such an approach requires differentiating through virtual CBF controllers at intermediate layers to construct a composite CBF for the overall system. Similar ideas are leveraged in \cite{AmesRAL22,AmesACC23} to construct CBFs for robotic systems based on reduced-order models -- an approach successfully used to safely control complex robotic systems, e.g., walking robots and drones. 

    Given the similarities between control Lyapunov functions (CLFs) and CBFs, one may wonder if it is possible to adapt smooth universal formulas for CLFs \cite{SontagSCL89} to CBFs. The answer is affirmative -- with some slight modifications.  Sontag's Universal Formula for stabilization \cite{SontagSCL89} can be applied to safety as noted in \cite{AmesLCSS19,KrsticTAC23,SunACC23,PolArXiV23}. Despite this, Sontag's formula is scarcely used as a safety filter since, in its most common form, such a controller tends to be overly invasive, overriding inputs from the nominal controller even when not necessary to ensure safety. Alternative smooth universal formulas have been proposed in \cite{OngCDC19} based on computing weighted centroids of the set of all control values satisfying CBF and/or CLF conditions using the probability density function of a Gaussian distribution. In a different approach, the authors of  \cite{JayawardhanaAutomatica16} leverage Sontag's formula to combine stabilization and safety objectives.  Yet questions remain around the connections between smoothness and safety filters.

    The main objective of this paper is to provide a general characterization of smooth safety filters -- smooth controllers that guarantee safety in a minimally invasive fashion. Our characterization is motivated by the original development of Sontag's formula in \cite{SontagSCL89}. Sontag's formula is derived by computing the roots of an algebraic equation parameterized by the Lie derivatives of a CLF/CBF. When certain regularity conditions are met, the smoothness of such roots as a function of the Lie derivatives follows directly from the Implicit Function Theorem. In this paper, we seek a deeper understanding of the properties of this equation.  
    In particular: What properties should this equation satisfy so that one of its solutions produces a smooth minimally invasive controller that guarantees safety? We answer this question by constructing an ordinary differential equation (ODE) from a given algebraic one such that the trajectories of this ODE coincide with the solutions of the algebraic equation. Leveraging invariance-like tools, we introduce sufficient conditions for this ODE so that its trajectories produce a smooth safety filter. This characterization leads to various smooth universal formulas for safety-critical control that allow one to assess the conservatism of the resulting safety filter, the utilities of which we illustrate through their application to safety-critical control based on reduced order models \cite{AmesRAL22,AmesACC23}.

    \section{Preliminaries and Problem Formulation}
    Consider the nonlinear control affine system:
    \begin{equation}\label{eq:dyn}
        \dot{\bx} = \bff(\bx) + \bg(\bx)\bu,
    \end{equation}
    where $\bx\in\R^n$ is the system state, $\bu\in\R^m$ is the control input, $\bff\,:\,\R^n\rightarrow\R^n$ is the drift vector field, and $\bg\,:\,\R^n\rightarrow\R^{n\times m}$ captures the control directions. Throughout this paper, we assume that $\bff$ and $\bg$ are smooth functions of the state. Applying a smooth feedback controller $\bk\,:\,\R^n\rightarrow\R^m$ to \eqref{eq:dyn} produces the smooth closed-loop system:
    \begin{equation}\label{eq:dyn-cl}
        \dot{\bx} = \bff(\bx) + \bg(\bx)\bk(\bx),
    \end{equation}
    which, for each initial condition $\bx_0\in\R^n$, generates a unique smooth solution $\bx\,:\,I(\bx_0)\rightarrow\R^n$ satisfying \eqref{eq:dyn-cl} on some maximal interval of existence $I(\bx_0)\subseteq\R_{\geq0}$. 

    \subsection{CLFs and Sontag's Universal Formula}
    Before discussing smooth safety filters, we recount the main ideas behind CLFs as presented in \cite{SontagSCL89}. Recall that a smooth, proper, positive definite function $V\,:\,\R^n\rightarrow\R_{\geq0}$ is a CLF for \eqref{eq:dyn} if for all $\bx\in\R^n\setminus\{0\}$
    \begin{equation*}
        \inf_{\bu\in\R^m}\{\underbrace{\nabla V(\bx)\cdot\bff(\bx)}_{L_{\bff}V(\bx)} + \underbrace{\nabla V(\bx)\cdot \bg(\bx)}_{L_{\bg}V(\bx)}\bu \} < 0.
    \end{equation*}
    The existence of a CLF implies that for each $\bx\in\R^n$ there exists a $\bu\in\R^m$ that enforces $V$ to decrease, and allows for constructing a feedback controller $\bk\,:\,\R^n\rightarrow\R^m$ that renders the origin asymptotically stable by ensuring that:
    \begin{equation}\label{eq:V-dot}
        \forall \bx\in\R^n\setminus\{0\}\,:\,\LfV(\bx) + \LgV(\bx)\bk(\bx)<0.
    \end{equation}
    In this paper, we are concerned with designing \emph{smooth} feedback controllers satisfying a general class of affine inequalities, such as the one in \eqref{eq:V-dot}. In \cite{SontagSCL89} Sontag provides one example of such a controller, now known as Sontag's Universal Formula for stabilization, which is given by:
    \begin{equation}\label{eq:k-CLF}
        \bk(\bx) = \lambda_{\mathrm{CLF}}(\LfV(\bx), \|\LgV(\bx)\|^2)\LgV(\bx)\T,
    \end{equation}
    \begin{equation}\label{eq:lambda-Sontag-CLF}
        \lambda_{\mathrm{CLF}}(a,b) \coloneqq
        \begin{cases}
            0 & b = 0 \\
            \frac{-a - \sqrt{a^2 + q(b)b}}{b} & b\neq0,
        \end{cases}
    \end{equation}
    where $q\,:\,\R\rightarrow\R$ is smooth and satisfies $q(0)=0$ and $q(b)>0$ for all $b\neq0$. In \cite{SontagSCL89}, the smoothness of Sontag's formula \eqref{eq:lambda-Sontag-CLF} is proven using an argument based on the Implicit Function Theorem \cite[Thm. 11.2]{Loomis}. Specifically, the following result \cite[Thm. 11.1]{Loomis}, which is related to the Implicit Function Theorem, is useful for establishing smoothness of \eqref{eq:lambda-Sontag-CLF}.
    \begin{theorem}[\cite{Loomis}]\label{theorem:implicit-function-theorem}
        Let $(a,b,p)\mapsto F(a,b,p)$ be a smooth function. If a continuous function $\lambda\,:\,\mathcal{S}\rightarrow\R$ satisfies:
        \begin{equation}\label{eq:F=0}
            F(a,b,\lambda(a,b))=0,
        \end{equation}
        \begin{equation}\label{eq:dFdp}
            \pdv{F}{p}(a,b,\lambda(a,b))\neq0,
        \end{equation}
        for all $(a,b)\in\mathcal{S}$, then $\lambda$ is smooth for all $(a,b)\in\mathcal{S}$ and its derivative is given by:
        \begin{equation}\label{eq:dF}
            \begin{bmatrix}
                \pdv{\lambda}{a}(a,b) \\ \pdv{\lambda}{b}(a,b)
            \end{bmatrix}
            =
            -\frac{1}{\pdv{F}{p}(a,b,\lambda(a,b))}
            \begin{bmatrix}
                \pdv{F}{a}(a,b,\lambda(a,b)) \\ \pdv{F}{b}(a,b,\lambda(a,b))
            \end{bmatrix}. 
        \end{equation}
    \end{theorem}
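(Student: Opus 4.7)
The plan is to deduce the theorem from the classical Implicit Function Theorem applied pointwise on $\mathcal{S}$, followed by a routine chain-rule computation for the derivative formula. First, I would fix an arbitrary point $(a_0, b_0) \in \mathcal{S}$ and set $p_0 \coloneqq \lambda(a_0, b_0)$. By \eqref{eq:F=0} we have $F(a_0, b_0, p_0) = 0$, and by \eqref{eq:dFdp} we have $\pdv{F}{p}(a_0, b_0, p_0) \neq 0$. Since $F$ is smooth, the classical Implicit Function Theorem then yields an open neighborhood $U \times V$ of $(a_0, b_0)$ in $\R^2$, an open neighborhood $W$ of $p_0$ in $\R$, and a unique smooth function $\tilde{\lambda}\,:\,U\times V \rightarrow W$ such that $F(a, b, \tilde{\lambda}(a,b)) = 0$ for all $(a,b) \in U\times V$, and moreover such that $\tilde\lambda(a,b)$ is the only value in $W$ that solves $F(a,b,p)=0$.

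Next I would invoke continuity of $\lambda$ to identify $\lambda$ with $\tilde{\lambda}$ locally. Since $\lambda$ is continuous at $(a_0, b_0)$ and $\lambda(a_0,b_0) = p_0 \in W$, there exists a smaller neighborhood $U' \times V' \subseteq U \times V$ of $(a_0, b_0)$ on which $\lambda(a,b) \in W$. On this smaller neighborhood, $\lambda(a,b)$ satisfies $F(a,b,\lambda(a,b)) = 0$ by \eqref{eq:F=0}, and lies in $W$, so by the uniqueness assertion of the Implicit Function Theorem we must have $\lambda(a,b) = \tilde{\lambda}(a,b)$ on $U'\times V'$. Consequently $\lambda$ inherits the smoothness of $\tilde\lambda$ on $U'\times V'$, and since $(a_0,b_0)\in\mathcal{S}$ was arbitrary, $\lambda$ is smooth on all of $\mathcal{S}$.

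Finally, I would obtain the derivative formula \eqref{eq:dF} by differentiating the identity \eqref{eq:F=0} with respect to $a$ and $b$ using the chain rule, which gives
\begin{equation*}
    \pdv{F}{a} + \pdv{F}{p}\pdv{\lambda}{a} = 0, \qquad \pdv{F}{b} + \pdv{F}{p}\pdv{\lambda}{b} = 0,
\end{equation*}
each evaluated at $(a,b,\lambda(a,b))$. Solving for $\pdv{\lambda}{a}$ and $\pdv{\lambda}{b}$ using the nonvanishing assumption \eqref{eq:dFdp} produces the stated formula.

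The only subtle step is the second one, where the local smooth solution provided by the standard Implicit Function Theorem must be identified with the given continuous solution $\lambda$; this hinges essentially on local uniqueness from the IFT combined with the assumed continuity of $\lambda$. Without continuity one could envisage $\lambda$ jumping between different branches of the zero set of $F$, so that hypothesis is doing the real work. Everything else is a direct appeal to the classical IFT and the chain rule.
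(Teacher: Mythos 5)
Your proof is correct. The paper itself offers no proof of this theorem---it is imported verbatim from Loomis--Sternberg---so the only benchmark is the standard textbook argument, which is exactly what you give: apply the classical Implicit Function Theorem pointwise, use its local uniqueness clause together with the assumed continuity of $\lambda$ to identify $\lambda$ with the local smooth branch $\tilde\lambda$, and then obtain \eqref{eq:dF} by differentiating \eqref{eq:F=0} via the chain rule. You also correctly flag the one genuinely load-bearing hypothesis (continuity of $\lambda$, which rules out branch-jumping). The only technicality you gloss over is that the set $\mathcal{S}$ used later in the paper contains points with $b=0$ that are not interior to $\R\times\R_{\geq0}$, so at such points ``smooth'' must be read as ``agrees with the restriction of a smooth function defined on an open neighborhood''; your argument delivers exactly that, since $\tilde\lambda$ is smooth on a full open neighborhood in $\R^2$ and $\lambda=\tilde\lambda$ on its intersection with $\mathcal{S}$.
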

    To apply this theorem and show smoothness of the function $\lambda_{\mathrm{CLF}}$ in \eqref{eq:lambda-Sontag-CLF}, one considers the smooth function:
    \begin{equation}\label{eq:F-Sontag}
        F(a,b,p) = bp^2 + 2ap - q(b),
    \end{equation}
    noting that $\lambda_{\mathrm{CLF}}$ is continuous and satisfies \eqref{eq:F=0} and \eqref{eq:dFdp} for each $(a,b)$ in $\mathcal{S}_{\mathrm{CLF}} \coloneqq \{(a,b)\in\R\times\R_{\geq0}\,:\,a<0\,\vee\,b>0\}$. 
    In addition to being smooth, one can also verify that the controller \eqref{eq:k-CLF} constructed from $\lambda_\mathrm{CLF}$ satisfies inequality~\eqref{eq:V-dot}. Indeed, this is a consequence of picking an appropriate function for $F$, a point which we will expand on later.
    
    \subsection{CBFs and Safety Filters}
    The main objective of this paper is to leverage Theorem~\ref{theorem:implicit-function-theorem} for constructing smooth controllers that will render the resulting closed-loop system \emph{safe}, a property often formalized using the framework of set invariance \cite{Blanchini}. Formally, we say that \eqref{eq:dyn-cl} is safe on a set $\C\subset\R^n$ if $\C$ is forward invariant. By considering sets of the form:
    \begin{equation}\label{eq:C}
        \C = \{\bx\in\R^n\,:\,h(\bx) \geq 0\},
    \end{equation}
    where $h\,:\,\R^n\rightarrow\R$ is a smooth function, the existence of a safe feedback controller can be characterized using the concept of a control barrier function (CBF) \cite{AmesTAC17}.

    \begin{figure*}
        \centering
        \includegraphics{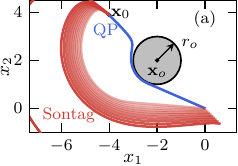}
        \hfill
        \includegraphics{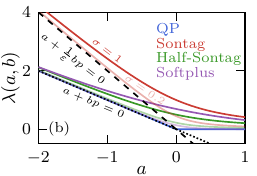}
        \hfill
        \includegraphics{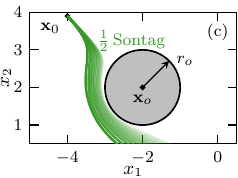}
        \hfill
        \includegraphics{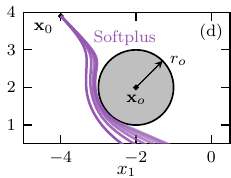}
        \vspace{-0.3cm}
        \caption{(a) Comparison between the trajectory of a single integrator generated by the QP controller (blue) in \eqref{eq:cbf-QP} and Sontag safety filter (red) from \eqref{eq:sontag-cbf} for Example \ref{ex:single-int}. Each safety filter is implemented from the initial condition $\bx_0=(-4,3.9)$ with $\alpha=2$. The values of $\sigma$ are varied from 0.2 to 0.001. (b) Illustration of the sets $\mathcal{H}_b$ (dotted black line) and $\mathcal{H}_b^{\varepsilon}$ (dashed black line) from \eqref{eq:Hb}  and \eqref{eq:H-epsilon}, respectively, for a fixed $b$. (c-d) Trajectories of the single integrator under the influence of the smooth safety filters from \eqref{eq:lambda-half-sontag} (c) and \eqref{eq:lambda-softplus} (d). The trajectories are generated with $\sigma$ varying from 1 to 0.01 for (c) and from 0.7 to 0.01 for (d). In each plot, more transparent curves correspond to smaller values of $\sigma$.}
        \label{fig:simple-example}\vspace{-0.5cm}
    \end{figure*}
    
    \begin{definition}
        A smooth function $h\,:\,\R^n\rightarrow\R$ defining a set $\C\subset\R^n$ as in \eqref{eq:C} is said to be a \emph{control barrier function} (CBF) for \eqref{eq:dyn} on $\C$ if zero is a regular value of $h$ and there exists a smooth\footnote{A continuous function $\alpha\,:\,\R\rightarrow\R$ is an extended class $\Kinf$ function ($\alpha\in\Kinf^e$) if $\alpha(0)=0$, $\alpha$ is increasing, and $\lim_{r\rightarrow\pm\infty}\alpha(r)=\pm\infty$.} $\alpha\in\K_{\infty}^e$ such that for all $\bx\in\R^n$
        \begin{equation*}\label{eq:cbf}
            \sup_{\bu\in\R^m}\big\{\underbrace{\nabla h(\bx)\cdot\bff(\bx)}_{L_{\bff}h(\bx)} + \underbrace{\nabla h(\bx)\cdot\bg(\bx)}_{L_{\bg}h(\bx)}\bu \big\} > -\alpha(h(\bx)).
        \end{equation*}
    \end{definition}
    Similar to CLFs, any feedback controller $\bk\,:\,\R^n\rightarrow\R^m$ satisfying the inequality $\Lfh(\bx) + \Lgh(\bx)\bk(\bx)\geq-\alpha(h(\bx))$ ensures that $h$ remains positive along closed-loop trajectories, and therefore renders $\C$ forward invariant \cite{AmesTAC17}.
    Perhaps the greatest utility of CBFs is their ability to act as a \emph{safety filter} for a nominal feedback controller $\bk_{\mathrm{d}}\,:\,\R^n\rightarrow\R^m$.
    A safety filter is a controller that modifies $\bk_{\mathrm{d}}$ -- preferably, in a minimally invasive fashion -- so that the resulting closed-loop system is safe. The most common examples of such safety filters are instantiated via the QP:
    \begin{equation}\label{eq:cbf-QP}
        \begin{aligned}
            \bk(\bx) =\argmin_{\bu\in\R^m} \quad & \tfrac{1}{2}\|\bu - \bk_{\mathrm{d}}(\bx)\|^2 \\ 
            \text{subject to} \quad & L_{\bff}h(\bx) + L_{\bg}h(\bx)\bu \geq - \alpha(h(\bx)),
        \end{aligned}
    \end{equation} 
    which modifies $\bk_{\mathrm{d}}$ in the $\Lgh^\top$ direction:  
    \begin{equation}\label{eq:k-CBF}
        \bk(\bx) = \bk_{\mathrm{d}}(\bx) + \lambda(a(\bx),b(\bx))\Lgh(\bx)\T,
    \end{equation}
    with a scalar function $\map{\lambda}{\R\times\R}{\R}$, where:
    \begin{equation}\label{eq:ab}
        \begin{aligned}
            a(\bx) \coloneqq & L_{\bff}h(\bx) + L_{\bg}h(\bx)\bk_{\mathrm{d}}(\bx) + \alpha(h(\bx)) \\ 
            b(\bx) \coloneqq & \|L_{\bg}h(\bx)\|^2. \\
        \end{aligned}
    \end{equation}
    For the QP in \eqref{eq:cbf-QP}, $\lambda$ is the Lagrange multiplier associated with the constraint and is given by:
    \begin{equation}\label{eq:lambda-QP}
        \lambda(a,b)=\lambda_{\QP}(a,b) \coloneqq 
        \begin{cases}
            0 & b = 0 \\ 
            \relu(-a/b) & b > 0,
        \end{cases}
    \end{equation}
    where $\relu(y)\coloneqq \max\{0,y\}$. This QP-based controller has the advantage of being pointwise optimal, but is not smooth even if the problem data itself is smooth. Before proceeding, we make precise the notion of a smooth safety filter.
    \begin{definition}\label{def:smooth-safety}
        Given a CBF $h\,:\,\R^n\rightarrow\R$ and nominal controller $\bk_{\mathrm{d}}\,:\,\R^n\rightarrow\R^m$ for \eqref{eq:dyn}, a controller $\bk\,:\,\R^n\rightarrow\R^m$ of the form \eqref{eq:k-CBF} is said to be a \emph{smooth safety filter} for \eqref{eq:dyn} with respect to $\C$ if $\bk$ is smooth and for all $\bx\in\R^n$
        \begin{equation}\label{eq:ab-constraint}
            \underbrace{a(\bx) + b(\bx)\lambda(a(\bx), b(\bx))}_{\Lfh(\bx) + \Lgh(\bx)\bk(\bx) + \alpha(h(\bx))} \geq 0.
        \end{equation}
    \end{definition}
    In the following section, we present our characterization of smooth safety filters.

    \section{Towards Smooth Safety Filters}\label{sec:smooth-safety}
    A Sontag-like safety filter can also be derived using Theorem \ref{theorem:implicit-function-theorem}. Here, we use the same function as in \eqref{eq:F-Sontag}, taking the other root to the quadratic function:
    \begin{equation}\label{eq:sontag-cbf}
        \lambda_{\mathrm{S}}(a,b) \coloneqq \begin{cases}
            0 & b = 0 \\ 
            \frac{-a + \sqrt{a^2 + q(b)b}}{b} & b > 0.
        \end{cases}
    \end{equation}
    One can verify $\lambda_{\mathrm{S}}$ is continuous and satisfies \eqref{eq:F=0} and \eqref{eq:dFdp} on
    \begin{equation}\label{eq:S}
        \mathcal{S} = \{(a,b)\in\R\times\R_{\geq0}\,:\,a>0\,\vee\,b>0\},
    \end{equation}
    implying $\lambda_{\mathrm{S}}$ is smooth on \eqref{eq:S} by Theorem~\ref{theorem:implicit-function-theorem} and therefore produces a smooth safety filter via \eqref{eq:k-CBF} as one can verify $a + b\lambda_{\mathrm{S}}(a,b)\geq0$. Nevertheless, the resulting safety filter is overly conservative as illustrated by the following example.
    \begin{example}\label{ex:single-int}
        Consider a single integrator $\dot{\bx}=\bu$ tasked with reaching the origin while avoiding a circular obstacle of radius $r_o\in\R_{>0}$ located at $\bx_o\in\R^2$. This task can be accomplished by designing a safety filter for the nominal controller $\bk_{\mathrm{d}}(\bx)=-\bx$ using $h(\bx)=\|\bx-\bx_o\|^2 - r_o^2$ as a CBF, which we construct using the QP in \eqref{eq:cbf-QP} and the Sontag safety filter defined by \eqref{eq:k-CBF} and \eqref{eq:sontag-cbf} with $q(b)=\sigma b$, $\sigma\in\R_{>0}$. The trajectory of the single integrator under each safety filter is provided in Fig. \ref{fig:simple-example}(a) for different $\sigma$. Decreasing $\sigma$ decreases the conservatism of the controller, but it is still overly invasive even for arbitrarily small $\sigma$.
    \end{example}
    Motivated by the previous example, we set out to develop a general class of smooth safety filters that are less invasive by analyzing how the choice of $F$ impacts the behavior of the resulting safety filter. Towards this development, we aim to answer the following: What properties should a function $(a,b,p)\mapsto F(a,b,p)$ satisfy so that one of its roots produces a smooth safety filter, and how may this function be selected such that the resulting safety filter is minimally invasive?

    \subsection{Minimally Invasive Smooth Safety Filters}
    To answer the question posed in the previous subsection, let $(a,b,p)\mapsto F(a,b,p)$ be a smooth function and suppose there exists a continuous function $\lambda\,:\,\mathcal{S}\rightarrow\R$ satisfying the conditions of Theorem \ref{theorem:implicit-function-theorem}, which implies that $\lambda$ is smooth on $\mathcal{S}$. We are also interested in ensuring that:
    \begin{equation}\label{eq:lambda-ab}
        a + b\lambda(a,b) \geq 0,
    \end{equation}
    for all $(a,b)\in\mathcal{S}$ so that $\lambda$ may be used to construct a smooth safety filter satisfying \eqref{eq:ab-constraint}. Note that $F$ in~\eqref{eq:F-Sontag} is constructed such that $\pdv{F}{p}=2(a + b\lambda_{\mathrm{S}}(a,b)) >0$ directly implies the satisfaction of~\eqref{eq:lambda-ab}. However, we demonstrate that this is not the only path towards certifying that one of $F$'s roots produces a smooth safety filter. In what follows, we introduce more general sufficient conditions on $F$ so that one of its roots produces a smooth safety filter.
    
    Our starting point is one of the direct consequences of Theorem \ref{theorem:implicit-function-theorem} --  the function $\lambda$ must satisfy:
    \begin{equation}\label{eq:lambda-dyn}
        \pdv{\lambda}{a}(a,b) = -\frac{\pdv{F}{a}(a,b,\lambda(a,b))}{\pdv{F}{p}(a,b,\lambda(a,b))},
    \end{equation}
    for all $(a,b)\in\mathcal{S}$, implying it is a solution to the ODE:
    \begin{equation}\label{eq:p-dyn}
        \odv{p}{a} = -\frac{\pdv{F}{a}(a,b,p)}{\pdv{F}{p}(a,b,p)},
    \end{equation}
    from an appropriate initial condition\footnote{In \eqref{eq:p-dyn}, one may think of $p$ as the state, $a$ as ``time", and $b$ as a parameter.}. For a fixed $b = b^*\geq0$, the trajectory $a\mapsto\lambda(a,b^*)$ defines a curve $\{(a,b,p)\in\mathcal{S}\times\R\,:\,p=\lambda(a,b),~b=b^*\}$. As $b$ is varied, this curve defines an entire surface $\{(a,b,p)\in\mathcal{S}\times\R\,:\,p=\lambda(a,b)\}$, thereby recovering $(a,b)\mapsto\lambda(a,b)$ as a function of both $a$ and $b$ (cf. Fig. \ref{fig:main_figure}).  
    Although working with an ODE is generally more challenging than working with an algebraic equation, this new perspective allows us to reformulate the goal of satisfying inequality~\eqref{eq:lambda-ab} as a set invariance problem. To do this, we define, for each $b\geq0$, the set-valued map\footnote{One may think of \eqref{eq:Hb} as a collection of ``time-varying" safe sets (recall that $a$ is our ``time" variable) for the dynamics in \eqref{eq:lambda-dyn}.}
    \begin{equation}\label{eq:Hb}
        \mathcal{H}_b(a)\coloneqq \{p\in\R\,:\,a + bp\geq0\}.
    \end{equation}
    Ultimately, we would like to ensure that for each fixed $b=b^*\geq 0$, the solution to~\eqref{eq:lambda-dyn} satisfies $\lambda(a,b^*)\in\mathcal{H}_{b^*}(a)$ for all $(a,b^*)\in \mathcal{S}$ so that we can conclude that $\lambda$ satisfies \eqref{eq:lambda-ab} for all $(a,b)\in\mathcal{S}$. The following proposition constitutes the first result of this paper, establishing conditions on $F$ such that the flow of \eqref{eq:p-dyn} satisfies inequality \eqref{eq:lambda-ab}.
    \begin{proposition}\label{proposition:smooth-safety-1}
        Let $F\,:\,\R^3\rightarrow\R$ be a smooth function 
        that defines the dynamics in \eqref{eq:p-dyn}. Suppose that for all $b>0$:
        \begin{equation}\label{eq:invariance-condition-1}
            \pdv{F}{p}\bigg\vert_{p=-\frac{a}{b}} = b\pdv{F}{a}\bigg\vert_{p=-\frac{a}{b}},
        \end{equation}
        Then, for all $b>0$, the set $\mathcal{H}_b$ is invariant for \eqref{eq:p-dyn}.
    \end{proposition}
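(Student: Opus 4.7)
The plan is to recast the claimed invariance as a Nagumo-type tangency check by lifting the non-autonomous ODE~\eqref{eq:p-dyn} (in which $a$ plays the role of time and $b$ is a fixed positive parameter) into an autonomous system on the $(a,p)$-plane with $\dot{a}=1$ and $\dot{p}=-\pdv{F}{a}/\pdv{F}{p}$. Under this lift, the time-varying set $\mathcal{H}_b(a)$ becomes the closed half-plane $\{(a,p)\in\R^2\,:\,a+bp\geq 0\}$, whose boundary is the line $p=-a/b$. To establish forward invariance, I will introduce a natural barrier-like function on this half-plane and verify that its derivative along the lifted flow satisfies Nagumo's condition on the boundary.

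A natural choice is $\psi(a,p) \coloneqq a + bp$, which is nonnegative precisely on the lifted set and vanishes on its boundary. Differentiating along the flow yields
\begin{equation*}
    \dot\psi = \pdv{\psi}{a}\dot{a} + \pdv{\psi}{p}\dot{p} = 1 - b\,\frac{\pdv{F}{a}(a,b,p)}{\pdv{F}{p}(a,b,p)}.
\end{equation*}
The key step is to evaluate $\dot\psi$ on the boundary $\{p=-a/b\}$ and invoke the hypothesis~\eqref{eq:invariance-condition-1}. Substituting the identity $\pdv{F}{p}\big\vert_{p=-a/b} = b\,\pdv{F}{a}\big\vert_{p=-a/b}$ into the fraction above collapses it to $1$, giving $\dot\psi\big\vert_{p=-a/b} = 1 - 1 = 0$. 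Hence Nagumo's subtangentiality condition $\dot\psi\geq 0$ holds along $\partial\mathcal{H}_b$ (with equality), so by Nagumo's Theorem~\cite{Blanchini} the lifted half-plane is forward invariant under the autonomous flow, which projects onto the claimed invariance of $\mathcal{H}_b$ for~\eqref{eq:p-dyn}.

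The main subtlety lies in ensuring that the right-hand side of~\eqref{eq:p-dyn} is well-defined along the boundary, i.e., that $\pdv{F}{p}\big\vert_{p=-a/b}\neq 0$; this is inherited from the nondegeneracy condition~\eqref{eq:dFdp} of Theorem~\ref{theorem:implicit-function-theorem} which is implicitly in force whenever~\eqref{eq:p-dyn} is posed as a genuine ODE. A more conceptual point worth noting is that the hypothesis delivers $\dot\psi=0$ on $\partial\mathcal{H}_b$ rather than the strict inequality $\dot\psi>0$; geometrically, trajectories that touch the boundary slide tangentially along it, which is nevertheless sufficient for forward invariance of the closed set.
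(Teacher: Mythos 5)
Your core computation is exactly the paper's: both proofs differentiate $a+bp$ along the flow of \eqref{eq:p-dyn} (your autonomous lift with $\dot a = 1$ is only a cosmetic repackaging of the paper's treatment of $a$ as time and $b$ as a parameter), substitute the hypothesis \eqref{eq:invariance-condition-1} on the line $p=-a/b$, and obtain $1-1=0$. Your handling of the degenerate locus $\pdv{F}{p}=0$ also matches the paper's, which observes that no classical solution passes through such points, so they can be excluded.

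The one genuine gap is your final step. Nagumo's subtangentiality condition $\dot\psi\geq 0$ on the boundary yields only \emph{forward} invariance of the lifted half-plane, i.e., invariance in the direction of increasing $a$. But the proposition asserts that $\mathcal{H}_b$ is \emph{invariant}, and the two-sidedness matters downstream: in Lemma \ref{lemma:invariance} the solution is anchored at $a_0=0$ and must be followed over all of $\R$, both forward and backward in $a$. A set can be forward invariant without being backward invariant, so ``forward invariant \ldots\ projects onto the claimed invariance'' is a non sequitur as written. The fix is already contained in your computation: since $\dot\psi$ equals $0$ exactly (not merely $\geq 0$) on $\partial\mathcal{H}_b$, the time-reversed flow satisfies the same tangency condition, so $\mathcal{H}_b$ is invariant in both directions. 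The paper makes this point explicitly---it insists on the equality $\odv{h_b}{a}=0$ rather than an inequality precisely because it ``must show invariance of $\mathcal{H}_b$, not just forward invariance, since $a$ may take any value in $\R$.'' You should close the argument by invoking the equality for both time directions rather than stopping at forward invariance.
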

    \begin{proof}
        We begin by noting that the dynamics~\eqref{eq:p-dyn} may not be well-defined for a given initial condition $p_0(a_0)\in\mathcal H_b(a_0)$ since we may have $\pdv{F}{p}(a_0,b,p_0)=0$. 
        Nevertheless, there exists no solution\footnote{We consider solutions in the classical sense.} from such initial conditions, and the empty solution is contained in $\mathcal H_b$.

        For other initial conditions $p_0(a_0)\in \mathcal H_b(a_0)$ where the dynamics are well-defined, there exists an interval of existence $I(p_0(a_0))\subset\R$ for the solution $\lambda(\cdot,b)\,:\,I(p_0(a_0))\rightarrow\R$ from each initial condition.
        We now show that when $b>0$, the condition $a+b\lambda(a,b)\geq 0$ holds for all $a\in I(p_0(a_0))$. To this end, we define:
        \begin{equation}\label{eq:hb}
            h_b(p,a) = a + bp.
        \end{equation}
        Because $\pdv{h_b}{p}(p,a)=b>0$, zero is a regular value of $h_b$; hence, the condition $h_b(\lambda(a,b),a) \geq 0$ holds along the trajectory provided that the total derivative $\odv{h_b}{a} = 0$ whenever $h_b(p,a)=0$, i.e., when $p=-a/b$. Note that we use equality in the above (rather than an inequality) since we must show invariance of $\mathcal{H}_b$, not just forward invariance, since $a$ may take any value in $\R$.
        Computing the derivative of $h_b$ along the trajectory of \eqref{eq:p-dyn} yields:
        \begin{equation*}
                \odv{h_b}{a} =  \pdv{h}{a} + \pdv{h}{p}\odv{p}{a} 
                =  1 - b\bigg[\pdv{F}{p}\bigg]^{-1} \pdv{F}{a}.
        \end{equation*}
        It follows from \eqref{eq:invariance-condition-1} that the above evaluates to zero when $h_b(p,a)=0$ so long as $\pdv{F}{p}\neq0$, which must be true for the solution to exist, and implies $\mathcal H_b$ is invariant, as desired.
    \end{proof}

    Proposition \ref{proposition:smooth-safety-1} provides a simple condition~\eqref{eq:invariance-condition-1}, for which one can use to help assess if an implicit function $\lambda$ to a given smooth function $F$ will satisfy inequality~\eqref{eq:lambda-ab}. The key idea is that with Proposition~\ref{proposition:smooth-safety-1}, if $\lambda$ satisfies \eqref{eq:lambda-ab} for some $(a^*,b^*)\in \mathcal{S}$, it will satisfy \eqref{eq:lambda-ab} for all $a$ in its interval of existence for the same $b=b^*$. The following lemma establishes this idea formally and additional conditions on $\lambda$ so that it satisfies \eqref{eq:lambda-ab} for all $(a,b)\in\mathcal{S}$. 
    \begin{lemma}\label{lemma:invariance}
        Let $F\,:\,\R^3\rightarrow\R$ be a smooth function satisfying \eqref{eq:invariance-condition-1} for all $b>0$, and suppose there exists a continuous function $\lambda\,:\,\mathcal{S}\rightarrow\R$ satisfying~\eqref{eq:F=0} and \eqref{eq:dFdp} for all $(a,b)\in \mathcal S$ as in \eqref{eq:S}. Then, if $\lambda(0,b)>0$ for all $b>0$, $\lambda$ is smooth on $\mathcal{S}$ and satisfies inequality \eqref{eq:lambda-ab} for all $(a,b)\in\mathcal S$.
    \end{lemma}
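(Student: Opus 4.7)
The plan is to decompose the claim into three pieces: (i) smoothness, (ii) satisfaction of the inequality on the slice $b=0$, and (iii) satisfaction of the inequality for each fixed $b>0$.

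First I would invoke Theorem~\ref{theorem:implicit-function-theorem} directly. The hypotheses of the lemma already state that $\lambda$ is continuous on $\mathcal{S}$ and satisfies \eqref{eq:F=0} together with \eqref{eq:dFdp} throughout $\mathcal{S}$. Therefore Theorem~\ref{theorem:implicit-function-theorem} immediately yields smoothness of $\lambda$ on $\mathcal{S}$, and also gives that $\lambda(\cdot,b)$ satisfies the ODE \eqref{eq:lambda-dyn} pointwise on $\mathcal{S}$.

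Next I would handle the easy slice $b=0$. By the definition of $\mathcal{S}$ in \eqref{eq:S}, any point $(a,0)\in\mathcal{S}$ must have $a>0$; hence $a+b\lambda(a,b)=a>0$, and \eqref{eq:lambda-ab} holds trivially on this slice.

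The main content is the case $b=b^*>0$. Fix such a $b^*$; then $(a,b^*)\in\mathcal{S}$ for every $a\in\R$, so the smooth function $a\mapsto\lambda(a,b^*)$ is a classical solution of the ODE \eqref{eq:p-dyn} on all of $\R$. At the distinguished ``time'' $a=0$, the hypothesis $\lambda(0,b^*)>0$ gives $h_{b^*}(\lambda(0,b^*),0)=b^*\lambda(0,b^*)>0$, i.e., $\lambda(0,b^*)\in\mathcal{H}_{b^*}(0)$ with strict inequality. By Proposition~\ref{proposition:smooth-safety-1} the set $\mathcal{H}_{b^*}$ is invariant under the flow of \eqref{eq:p-dyn}, so the trajectory $a\mapsto\lambda(a,b^*)$ remains in $\mathcal{H}_{b^*}(a)$ for every $a\in\R$. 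This is precisely $a+b^*\lambda(a,b^*)\geq 0$, establishing \eqref{eq:lambda-ab} for every $(a,b^*)\in\mathcal{S}$ with $b^*>0$.

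The only step that needs care is the third one: invariance as proved in Proposition~\ref{proposition:smooth-safety-1} is a statement about individual trajectories on their (possibly restricted) intervals of existence, whereas here we need the conclusion on all of $\R$. The resolution is that the conditions \eqref{eq:F=0} and \eqref{eq:dFdp} are assumed to hold throughout $\mathcal{S}$, which forces $\lambda(\cdot,b^*)$ to be a globally defined smooth solution of \eqref{eq:p-dyn} passing through the admissible initial condition at $a=0$; applying Proposition~\ref{proposition:smooth-safety-1} in both forward ($a>0$) and backward ($a<0$) ``time'' then delivers the inequality on the entire real line. Combining the three cases gives smoothness together with \eqref{eq:lambda-ab} on all of $\mathcal{S}$.
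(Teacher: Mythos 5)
Your proposal is correct and follows essentially the same route as the paper's own proof: smoothness via Theorem~\ref{theorem:implicit-function-theorem}, the trivial $b=0$ slice, and, for each fixed $b>0$, viewing $a\mapsto\lambda(a,b)$ as the globally defined solution of \eqref{eq:p-dyn} through the initial condition $\lambda(0,b)\in\mathcal{H}_b(0)$ and invoking the invariance of $\mathcal{H}_b$ from Proposition~\ref{proposition:smooth-safety-1}. Your closing remark about why the interval of existence is all of $\R$ is exactly the point the paper also makes explicit.
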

    \begin{proof}
        The proof is divided into two cases: i) $b>0$ and ii) $b=0$. For each $b>0$, consider an initial condition of \eqref{eq:p-dyn}, with $a_0=0$, satisfying $p_0^b=\lambda(0,b)>0$. Since $F$ and $\lambda$ satisfy \eqref{eq:F=0} and \eqref{eq:dFdp}, the conditions of Theorem \ref{theorem:implicit-function-theorem} hold, which implies $\lambda$ is smooth on $\mathcal{S}$. Moreover, $\lambda(a,b)$ is the unique solution to the ODE \eqref{eq:p-dyn} by definition since, by Theorem \ref{theorem:implicit-function-theorem}, it must satisfy \eqref{eq:dF} for all $(a,b)\in\mathcal{S}$. Thus, the maximal interval of existence of this solution must be the domain of $\lambda$, which is equal to $\R$ when $b>0$. Moreover, since $p_0^b \in \mathcal H_b(0)$ by definition and $F$ satisfies \eqref{eq:invariance-condition-1}, the conditions of Proposition \ref{proposition:smooth-safety-1} hold, which implies $\lambda(a,b)\in\mathcal{H}_b(a)$ for all $a\in\R$ for each $b>0$. When $b=0$ any value of $\lambda(a,b)$ satisfies $\lambda(a,b)\in\mathcal{H}_b(a)$ for all $a>0$. Thus, $\lambda(a,b)\in\mathcal{H}_{b}(a)$ for all $(a,b)\in\mathcal{S}$, which implies $\lambda$ satisfies \eqref{eq:lambda-ab}, as desired.
    \end{proof}
    Lemma~\ref{lemma:invariance} suggests that with condition~\eqref{eq:invariance-condition-1} from Proposition \ref{proposition:smooth-safety-1}, $F$ only needs to be constructed so that $\lambda$ is positive when $a=0$ for it to generate a smooth safety filter. Note that the above result does not guarantee the existence of a continuous $\lambda\,:\,\mathcal{S}\rightarrow\R$ satisfying \eqref{eq:F=0} and \eqref{eq:dFdp}, but states that, if such a function exists, then it is smooth on $\mathcal{S}$ and satisfies inequality \eqref{eq:lambda-dyn}. We will provide examples of $F$ satisfying these conditions shortly. First, we combine Proposition \ref{proposition:smooth-safety-1} and Lemma \ref{lemma:invariance} to establish the main result of this paper, which formalizes the construction of smooth safety filters.

    \begin{theorem}\label{theorem:smooth-safety}
        Consider system \eqref{eq:dyn} with a smooth nominal controller $\map{\bk_{\mathrm{d}}}{\R^n}{\R^m}$ and let $\map{h}{\R^n}{\R}$ be a CBF for \eqref{eq:dyn} on a set $\C\subset\R^n$ as in \eqref{eq:C}. Let $\lambda\,:\,\mathcal{S}\rightarrow\R$ satisfy the conditions of Lemma \ref{lemma:invariance} for some smooth function $\map{F}{\R^3}{\R}$. Then, the controller:
        \begin{equation}\label{eq:smooth-safety-filter}
            \bk_{\mathrm{s}}(\bx) = \bk_{\mathrm{d}}(\bx) + \lambda(a(\bx),b(\bx))\Lgh(\bx)\T,
        \end{equation}
        where $a\,:\,\R^n\rightarrow\R$ and $b\,:\,\R^n\rightarrow\R$ are as in \eqref{eq:ab}, is a smooth safety filter for \eqref{eq:dyn}. 
    \end{theorem}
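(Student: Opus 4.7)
The plan is to verify the two requirements of Definition \ref{def:smooth-safety} for the controller in \eqref{eq:smooth-safety-filter}: that $\bk_{\mathrm{s}}$ is smooth on $\R^n$ and that inequality \eqref{eq:ab-constraint} holds pointwise. The safety inequality will follow almost by bookkeeping from Lemma \ref{lemma:invariance}, while smoothness will reduce to confirming that the image of the map $\bx \mapsto (a(\bx),b(\bx))$ lies in the domain $\mathcal{S}$ on which $\lambda$ is known to be smooth.

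The first step I would take is to argue that $(a(\bx),b(\bx))\in \mathcal{S}$ for every $\bx\in\R^n$. Because $b(\bx)=\|\Lgh(\bx)\|^2\geq 0$ always, the only way to fall outside $\mathcal{S}=\{(a,b)\in\R\times\R_{\geq 0}\,:\,a>0\,\vee\,b>0\}$ would be to have $b(\bx)=0$ together with $a(\bx)\leq 0$. But $b(\bx)=0$ forces $\Lgh(\bx)=0$, at which point the CBF inequality at $\bx$ collapses to $\Lfh(\bx)>-\alpha(h(\bx))$, giving $a(\bx)=\Lfh(\bx)+\alpha(h(\bx))>0$. Hence the image of $(a,b)$ never escapes $\mathcal{S}$.

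With this in hand, inequality \eqref{eq:ab-constraint} is immediate: Lemma \ref{lemma:invariance} provides $a+b\lambda(a,b)\geq 0$ for all $(a,b)\in\mathcal{S}$, so substituting $(a,b)=(a(\bx),b(\bx))$ yields the required bound for every $\bx$. For smoothness of $\bk_{\mathrm{s}}$, I would observe that $a$, $b$, $\Lgh$, and $\bk_{\mathrm{d}}$ are all smooth in $\bx$ (as compositions of the smooth data $\bff$, $\bg$, $h$, $\alpha$, and $\bk_{\mathrm{d}}$), and that $\lambda$ is smooth on $\mathcal{S}$ by Lemma \ref{lemma:invariance}; the composition and sum in \eqref{eq:smooth-safety-filter} then deliver smoothness of $\bk_{\mathrm{s}}$.

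The main obstacle I anticipate is subtle and concerns the boundary points of $\mathcal{S}$ in $\R^2$, namely points $(a,0)$ with $a>0$, which are precisely the states with $\Lgh(\bx)=0$ produced by the previous step. Smoothness of $\lambda$ at such points should be interpreted through Theorem \ref{theorem:implicit-function-theorem}: since $F$ is smooth on all of $\R^3$ and satisfies \eqref{eq:F=0}--\eqref{eq:dFdp} on $\mathcal{S}$, the theorem produces a smooth solution on a genuine (two-sided) open neighborhood of such boundary points in $\R^2$, which must coincide with $\lambda$ on the physical side $b\geq 0$ by uniqueness. The composition $\lambda(a(\bx),b(\bx))$ is then smooth in the standard sense on a neighborhood of every $\bx\in\R^n$, closing the argument.
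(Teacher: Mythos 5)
Your proof is correct and follows the same route as the paper, whose one-line argument simply cites Lemma \ref{lemma:invariance} for the smoothness of $\lambda$ and for inequality \eqref{eq:lambda-ab}. Your added verification that $(a(\bx),b(\bx))\in\mathcal{S}$ for every $\bx$ (using the CBF condition when $\Lgh(\bx)=0$), together with the remark on smoothness at the boundary points $(a,0)$ of $\mathcal{S}$, supplies exactly the details the paper's proof leaves implicit.
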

    \begin{proof}
    The proof follows directly from Lemma \ref{lemma:invariance} since $\lambda$ is smooth and satisfies \eqref{eq:lambda-ab}, implying \eqref{eq:smooth-safety-filter} satisfies \eqref{eq:ab-constraint}.
    \end{proof}
    
    \begin{example}\label{ex:example-F}
        We use our results to construct the functions:
        \begin{subequations}\label{eq:F-examples}
            \begin{equation}\label{eq:F-half-sontag}
                F_1(a,b,p) = bp^2 + ap -\tfrac{1}{4}q(b),
            \end{equation}
            \begin{equation}\label{eq:F-softplus}
                F_2(a,b,p) = e^{{\frac{p}{\sigma}}} - e^{-\frac{a}{\sigma b}} - 1,
            \end{equation}
        \end{subequations}
        where $q$ is as in \eqref{eq:F-Sontag} and $\sigma\in\R_{>0}$. These functions satisfy \eqref{eq:invariance-condition-1} and the conditions of Lemma \ref{lemma:invariance} where: 
        \begin{subequations}
            \begin{equation*}
                F_1(0,b,p) = bp^2 - \tfrac{1}{4}q(b) = 0 \implies \lambda(0,b) = \pm \tfrac{1}{2}\sqrt{\tfrac{q(b)}{b}},
            \end{equation*}
            \begin{equation*}
                F_2(0,b,p) = e^{\frac{p}{\sigma}} -2 = 0 \implies \lambda(0,b) = \sigma\log(2).
            \end{equation*}
        \end{subequations}
        Hence, for each $F$ there exists a continuous $\lambda$ satisfying $\lambda(0,b)>0$ for all $b>0$ (recall that $q(b)>0$ for all $b>0$). After verifying these conditions, we proceed to compute the implicit functions satisfying \eqref{eq:F=0}:
        \begin{subequations}\label{eq:lambda-examples}
            \begin{equation}\label{eq:lambda-half-sontag}
                \lambda_1(a,b)=\tfrac{1}{2}\lambda_{\mathrm{S}}(a,b)=\begin{cases}
                    0 & b=0 \\ 
                    \frac{-a + \sqrt{a^2 + bq(b)}}{2b} & b>0
                \end{cases}
            \end{equation}
            \begin{equation}\label{eq:lambda-softplus}
                \lambda_2(a,b) = \begin{cases}
                    0 & b=0 \\ 
                    \sigma\log(1 + e^{-\frac{a}{b\sigma}}) & b>0,
                \end{cases}
            \end{equation}
        \end{subequations}
        which are continuous on $\mathcal{S}$, and therefore smooth on $\mathcal{S}$ as one can verify that $\pdv{F}{p}(a,b,\lambda(a,b))\neq0$ for all $(a,b)\in\mathcal{S}$. These functions are plotted in Fig. \ref{fig:simple-example}(b) for a fixed $b$. As guaranteed by Lemma~\ref{lemma:invariance}, each $\lambda$ satisfies inequality~\eqref{eq:lambda-ab}. Moreover, when $q(b)=\sigma b$ both \eqref{eq:lambda-half-sontag} and \eqref{eq:lambda-softplus} approach $\lambda_{\QP}$ in \eqref{eq:lambda-QP} in the limit as $\sigma\rightarrow0$. Indeed, when $b>0$ both \eqref{eq:lambda-half-sontag} and \eqref{eq:lambda-softplus} are smooth approximations of the $\relu$ function, corresponding to the Squareplus approximation (yielding a ``Half-Sontag" formula) and the Softplus approximation, respectively \cite{Squareplus}. The functions in \eqref{eq:lambda-examples} are used to construct smooth safety filters 
        via Theorem \ref{theorem:smooth-safety}, and are applied to the scenario from Example \ref{ex:single-int}, cf. Fig.~\ref{fig:simple-example}(c-d).
    \end{example}

    \subsection{Robust Smooth Safety Filters}
    The results in the previous subsection provide sufficient conditions under which the solution $(a,b)\mapsto\lambda(a,b)$ of \eqref{eq:p-dyn} satisfies inequality \eqref{eq:lambda-ab} for all $(a,b)\in\mathcal{S}$. Such conditions require $\mathcal{H}_b$ to be invariant for \eqref{eq:p-dyn}, which precludes the consideration of safety filters that remain in a strict subset of $\mathcal{H}_b$. For example, even though Sontag's formula in \eqref{eq:sontag-cbf} satisfies inequality \eqref{eq:lambda-ab} for all $(a,b)\in\mathcal{S}$, one can verify that the $F$ producing this formula \eqref{eq:F-Sontag} does not satisfy \eqref{eq:invariance-condition-1}.
    This is because the formula is contained within the set:
    \begin{equation}\label{eq:H-epsilon}
        \mathcal{H}^{\varepsilon}_b(a) \coloneqq \{p\in\R\,:\,a + \tfrac{1}{\varepsilon}bp \geq 0\},
    \end{equation}
    for $\varepsilon=2$, cf. Fig \ref{fig:simple-example}(b). The above set satisfies $\mathcal{H}^{\varepsilon}_b(a)\subseteq\mathcal{H}_b(a)$ for all $a\leq0$ and $\varepsilon\geq1$. When $a>0$, $\mathcal{H}^{\varepsilon}_b(a)\nsubseteq\mathcal{H}_b(a)$ for any $\varepsilon\geq1$. In this situation, the fact that Sontag's formula satisfies inequality \eqref{eq:lambda-ab} relies on the fact that it is always positive, $\lambda(a,b)\geq0$. Motivated by this observation, in this subsection we study the invariance of $\mathcal{H}_b^\varepsilon\cap\R_{\geq0}$, noting that:
    \begin{equation*}
        \varepsilon\geq 1 \implies \mathcal{H}_b^\varepsilon(a)\cap\R_{\geq0}\subset \mathcal{H}_b(a),
    \end{equation*}
    for all $(a,b)\in\mathcal{S}$. Taking this intersection imposes the additional requirement that $\lambda(a,b)\geq0$ for all $(a,b)\in\mathcal{S}$, which is not restrictive since $\lambda$ negative values of $\lambda$ imply the resulting controller is pushing in the wrong direction (i.e., toward the constraint boundary) and is attempting to violate \eqref{eq:lambda-ab}.
    As illustrated in Fig. \ref{fig:main_figure} and Fig. \ref{fig:simple-example}(b), increasing $\varepsilon$ lifts the boundary of $\mathcal{H}^{\varepsilon}_b$ off that of $\mathcal{H}_b$ leading to a more restricted set of values that $\lambda(a,b)$ may achieve. Although increasing $\varepsilon$ imposes a more conservative condition on $\lambda$, it adds an additional robustness margin to the resulting safety filter, which, as demonstrated in Sec. \ref{sec:examples}, may be useful in practice.
    The following proposition establishes conditions on $F$ such that the flow of \eqref{eq:lambda-dyn} satisfies the tightened condition in \eqref{eq:H-epsilon}.
    \begin{proposition}\label{proposition:smooth-safety-2}
       Let $F\,:\,\R^3\rightarrow\R$ be a smooth function that defines the dynamics in \eqref{eq:p-dyn}. Suppose that for all $b>0$:
        \begin{equation}\label{eq:invariance-condition-2}
           \varepsilon\pdv{F}{p}\Big\vert_{p=-\frac{\varepsilon a}{b}} = b\pdv{F}{a}\Big\vert_{p=-\frac{\varepsilon a}{b}},\quad \pdv{F}{a}\Big\vert_{p=0} = 0,
        \end{equation}
        Then, for each $b>0$, the set $\mathcal{H}_b^\varepsilon\cap\R_{\geq0}$ is invariant for \eqref{eq:p-dyn}.
    \end{proposition}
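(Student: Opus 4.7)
The plan is to adapt the strategy from Proposition \ref{proposition:smooth-safety-1}, noting that for fixed $b > 0$ the boundary of $\mathcal{H}_b^\varepsilon(a) \cap \R_{\geq 0}$ consists of two pieces: the tilted line $\{p = -\varepsilon a / b\}$ forming the boundary of $\mathcal{H}_b^\varepsilon$ (active when $a \leq 0$), and the horizontal ray $\{p = 0\}$ coming from $\R_{\geq 0}$ (active when $a \geq 0$). I would show that each piece is tangential to the flow of \eqref{eq:p-dyn} -- the first condition in \eqref{eq:invariance-condition-2} handling the tilted line and the second handling the horizontal ray -- and then conclude invariance of the intersection by the same uniqueness-of-solutions argument used in Proposition \ref{proposition:smooth-safety-1}.

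For the tilted piece, I would repeat the calculation of Proposition \ref{proposition:smooth-safety-1} almost verbatim using the modified scalar function
\begin{equation*}
h_b^\varepsilon(p, a) = a + \tfrac{b}{\varepsilon} p,
\end{equation*}
for which zero is again a regular value since $\pdv{h_b^\varepsilon}{p} = b/\varepsilon > 0$. Computing the total derivative along \eqref{eq:p-dyn} gives
\begin{equation*}
\odv{h_b^\varepsilon}{a} = 1 - \frac{b}{\varepsilon} \bigg[\pdv{F}{p}\bigg]^{-1} \pdv{F}{a},
\end{equation*}
which vanishes on $\{p = -\varepsilon a / b\}$ precisely when the first identity in \eqref{eq:invariance-condition-2} holds (assuming $\pdv{F}{p} \neq 0$ there). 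This yields invariance of $\mathcal{H}_b^\varepsilon$ just as in Proposition \ref{proposition:smooth-safety-1}.

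For the horizontal piece, the second identity in \eqref{eq:invariance-condition-2} forces $\odv{p}{a} = -\pdv{F}{a}/\pdv{F}{p} = 0$ on $\{p = 0\}$ (whenever $\pdv{F}{p} \neq 0$), so $p \equiv 0$ is itself a classical solution of \eqref{eq:p-dyn}. By uniqueness, no other trajectory can reach $p = 0$, so $\R_{\geq 0}$ is invariant for \eqref{eq:p-dyn}. Taking the intersection, any trajectory starting in $\mathcal{H}_b^\varepsilon(a) \cap \R_{\geq 0}$ cannot exit through either boundary piece, giving invariance of the intersection.

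The main obstacle I anticipate is the degeneracy already acknowledged in Proposition \ref{proposition:smooth-safety-1}: wherever $\pdv{F}{p} = 0$ the ODE \eqref{eq:p-dyn} has no classical solution, so those initial conditions must be handled via the ``empty solution'' convention. A subtlety that requires extra care is the corner $(a, p) = (0, 0)$ where the two boundary pieces meet: substituting $a = 0,\ p = 0$ into \eqref{eq:invariance-condition-2} forces $\pdv{F}{p} = 0$, so the corner falls into precisely this degenerate case and is vacuously contained in the invariant set, which is what makes the two tangency arguments glue together cleanly.
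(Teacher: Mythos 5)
Your proposal is correct and follows essentially the same route as the paper: invariance of $\mathcal{H}_b^\varepsilon$ via the tightened barrier $h_b^\varepsilon(p,a)=a+\tfrac{1}{\varepsilon}bp$ and the first identity in \eqref{eq:invariance-condition-2}, invariance of $\R_{\geq0}$ via the second identity, and invariance of the intersection of invariant sets. Your extra observation that the corner $(a,p)=(0,0)$ necessarily has $\pdv{F}{p}=0$ and thus falls under the empty-solution convention is a correct and worthwhile refinement that the paper leaves implicit.
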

    \begin{proof}
        Showing the invariance of $\mathcal{H}_b^\varepsilon$ follows the same steps as that of Proposition \ref{proposition:smooth-safety-1} by replacing $h_b$ from \eqref{eq:hb} with $h_b^\varepsilon(p,a) \coloneqq a + \tfrac{1}{\varepsilon}bp$,
        which defines $\mathcal{H}_{b}^\varepsilon(a)$ as its zero superlevel set. To show that $\R_{\geq0}=\{p\in\R\,:\,p\geq0\}$ is invariant, we define $h_p(p)\coloneqq p$, which defines $\R_{\geq0}$ as its zero superlevel set and satisfies $\pdv{h_p}{p}\neq0$, implying zero is a regular value of $h_p$. Hence, $\R_{\geq0}$ is invariant provided that:
        \begin{equation*}
            \odv{h_p}{a} = \odv{p}{a} = -\frac{\pdv{F}{a}(a,b,p)}{\pdv{F}{p}(a,b,p)},
        \end{equation*}
        evaluates to zero when $p=0$, which follows from \eqref{eq:invariance-condition-2} provided $\pdv{F}{p}\neq0$ when $p=0$. Using a similar argument to that in the proof of Proposition \ref{proposition:smooth-safety-1}, we may exclude points satisfying $\pdv{F}{p}=0$ in the above, since such points cannot lie along any trajectory produced by the dynamics in \eqref{eq:p-dyn}. Thus, since both $\mathcal{H}_b^\varepsilon$ and $\R_{\geq0}$ are invariant for \eqref{eq:p-dyn} and the intersection of invariant sets is invariant \cite[Prop. 4.13]{Blanchini}, it follows that $\mathcal{H}_b^\varepsilon\cap\R_{\geq0}$ is invariant for \eqref{eq:p-dyn}, as desired. 
    \end{proof}
    Similar to Proposition \ref{proposition:smooth-safety-1}, the above result provides a simple condition, \eqref{eq:invariance-condition-2}, that one may use to help determine if an implicit function $\lambda$ satisfying \eqref{eq:F=0} and \eqref{eq:dFdp} will satisfy inequality \eqref{eq:lambda-ab}. Note that results similar to Lemma \ref{lemma:invariance} and Theorem \ref{theorem:smooth-safety} can be stated for Proposition \ref{proposition:smooth-safety-2}, the formal statements of which we omit here in the interest of space. As noted earlier, Sontag's $F$ in \eqref{eq:F-Sontag} satisfies the conditions of Proposition \ref{proposition:smooth-safety-2}, with $\varepsilon=2$. The following example introduces a Sontag-like safety filter that satisfies such conditions for any $\varepsilon\geq1$. 
    \begin{example}
        The smooth safety filters from Example \ref{ex:example-F} can approximate the QP-based safety filter \eqref{eq:k-CBF} arbitrarily closely. Generally, this is desirable; however, in certain situations, such as when handling uncertainty, one may wish to modulate how conservative the resulting safety filter is by tuning the value of $\varepsilon$ (cf. Fig. \ref{fig:main_figure}). For this, we introduce:
        \begin{equation}\label{eq:F-robust-Sontag}
            F(a,b,p) = bp^2 + \varepsilon a p - \tfrac{\varepsilon^2}{4}q(b),
        \end{equation}
        which generalizes both \eqref{eq:F-Sontag} and \eqref{eq:F-half-sontag} so that $F$ satisfies Proposition \ref{proposition:smooth-safety-2} for any $\varepsilon\geq1$ and produces a robust version of Sontag's formula $\lambda(a,b)=\tfrac{\varepsilon}{2}\lambda_{\mathrm{S}}(a,b)$ for any $\varepsilon\geq1$.
    \end{example}
    
    \begin{remark}
        Our approach allows for characterizing safety filters via $\varepsilon$. Although various choices of $F$ may contain tuning parameters,
        the behavior of such a safety filter is limited by the value of $\varepsilon$. When $\varepsilon=1$, $\lambda(a,b)$ may approach the boundary $a+b\lambda(a,b)=0$ of constraint \eqref{eq:lambda-ab}, whereas when $\varepsilon>1$, $\lambda(a,b)$ may only approach the boundary of the tightened constraint $a+\tfrac{1}{\varepsilon}b\lambda(a,b)\geq0$, resulting in a more conservative but also more robust safety filter (cf. Fig \ref{fig:simple-example}(b)).
    \end{remark}

    \section{Numerical Examples}\label{sec:examples}
    This section illustrates the practical benefits of smooth safety filters by applying 
    them to the model-free safety-critical control paradigm introduced in \cite{AmesRAL22}. Here, we design a safety filter for a reduced-order model, the safe trajectory of which is tracked by the full-order dynamics in a model-free fashion. We consider the same setting as in \cite[Ex. 2]{AmesRAL22}, which involves designing a controller for a planar Segway with configuration $(x,\varphi)\in\R^2$, where $x$ is the position and $\varphi$ the pitch angle, with the objective of driving forward at a desired velocity $\dot{x}_{\mathrm{d}}$ and stopping before colliding with a wall located at $x_{\max}$. This leads to the safety constraint $h(x) =x_{\max} - x$, which is used as a CBF for a one-dimensional single integrator to construct a safety filter $k_0\,:\,\R\rightarrow\R$ that produces a safe velocity for the Segway. On the full-order dynamics, this velocity is tracked by the PD controller:
    \begin{equation*}
        \bk(\bx)=K_{\mathrm{p}}(x -k_0(x)) + K_{\varphi}\varphi + K_{\dot{\varphi}}\dot{\varphi}
    \end{equation*}
    that also attempts to keep the Segway upright, where $\bx=(x,\varphi,\dot{x},\dot{\varphi})\in\R^4$ is the state and $K_{\mathrm{p}},K_{\varphi},K_{\dot{\varphi}}\in\R_{>0}$ are gains. Implementation of this controller does not require knowledge of the full-order dynamics, which may be uncertain or difficult to compute, and allows for enforcing \emph{input-to-state} safety \cite{AmesLCSS19} of the closed-loop system \cite[Prop. 1]{AmesRAL22}.

    We now compare the response of the Segway when the safe velocity is generated by a QP-based safety filter and the robust Sontag safety filter from \eqref{eq:F-robust-Sontag} for different values of $\varepsilon$. We begin by using the same parameters for the controller as in \cite[Ex. 2]{AmesRAL22}, the results of which are shown in Fig. \ref{fig:model-free-safety}(a). Here, both controllers safely track the reference velocity, and the response of the smooth controller approaches that of the QP controller as $\varepsilon\rightarrow1$. Although this approach does not directly rely on model knowledge, it relies on tuning the gains of the tracking controller to achieve safety. In general, safety can be achieved by increasing the proportional gain $K_{\mathrm{p}}$ to track the reference velocity more aggressively. When increasing $K_{\mathrm{p}}$ too much, however, we observe that the controller attempting to track a non-differentiable reference signal causes instabilities and safety violation\footnote{Note that the results in \cite{AmesRAL22} rely on differentiability of $k_0$.}. In contrast, the same controller attempting to track a smooth reference velocity is more oscillatory, but maintains safety. On the other extreme, taking $K_{\mathrm{p}}$ too low (see Fig. \ref{fig:model-free-safety}(c)) results in safety violation for the QP controller and smooth controller with $\varepsilon=1$, whereas controllers with $\varepsilon>1$ maintain safety due to their increased robustness.

    \begin{figure}
        \centering
        \vspace{0.25cm} 
        \includegraphics{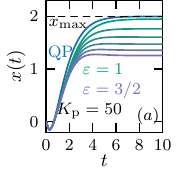}
        \hfill
        \includegraphics{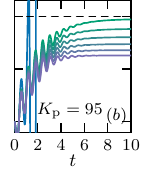}
        \hfill
        \includegraphics{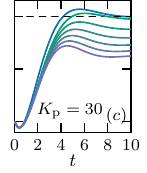}
        \hfill
        \vspace{-0.1cm}
        \includegraphics[width=0.48\textwidth]{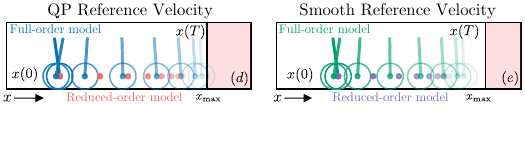}
        \vspace{-1.5cm}
        \caption{In each plot, the green/purple curves correspond to the trajectory obtained by tracking the smooth safe velocity generated by the robust Sontag safety filter for different $\varepsilon$ with $q(b)=0.1b$. Visualizations of trajectories from (c) are provided in (d) and (e), where the reduced-order trajectory corresponds to that of a single integrator $\dot{x}=u$ with $u=k_0(x)$ that the full-order model seeks to track.}
        \label{fig:model-free-safety}\vspace{-0.5cm}
    \end{figure}

    \section{Conclusion}
    We presented a general characterization of smooth safety filters based on the Implicit Function Theorem, leading to smooth universal formulas for safety-critical control that enable quantifying the conservatism of the resulting safety filter. The practical benefits of such smooth safety filters were showcased through their application to safety-critical control with reduced-order models \cite{AmesRAL22}. Future efforts will focus on extending our approach to multiple safety constraints and showcasing the benefits of smooth safety filters on hardware.
    
    \bibliographystyle{ieeetr}
    \bibliography{
        biblio/barrier,
        biblio/books,
        biblio/nonlinear,
        biblio/learning}

\begin{thebibliography}{10}

\bibitem{AmesTAC17}
A.~D. Ames, X.~Xu, J.~W. Grizzle, and P.~Tabuada, ``Control barrier function
  based quadratic programs for safety critical systems,'' {\em IEEE Trans.
  Autom. Control}, vol.~62, no.~8, pp.~3861--3876, 2017.

\bibitem{AmesECC19}
A.~D. Ames, S.~Coogan, M.~Egerstedt, G.~Notomista, K.~Sreenath, and P.~Tabuada,
  ``Control barrier functions: theory and applications,'' in {\em Proc. Eur.
  Control Conf.}, pp.~3420--3431, 2019.

\bibitem{WeiBook}
W.~Xiao, C.~G. Cassandras, and C.~Belta, {\em Safe Autonomy with Control
  Barrier Functions: Theory and Applications}.
\newblock Springer Nature, 2023.

\bibitem{PanagouAutomatica23}
J.~Breeden and D.~Panagou, ``Robust control barrier functions under high
  relative degree and input constraints for satellite trajectories,'' {\em
  Automatica}, vol.~155, 2023.

\bibitem{JankovicAutomatica18}
M.~Jankovic, ``Robust control barrier functions for constrained stabilization
  of nonlinear systems,'' {\em Automatica}, vol.~96, 2018.

\bibitem{CortesArXiV23}
M.~Alyaseen, N.~Atanasov, and J.~Cortes, ``Continuity and boundedness of
  minimum-norm {CBF}-safe controllers,'' {\em arXiv preprint arXiv:2306.07398},
  2023.

\bibitem{Blanchini}
F.~Blanchini and S.~Miani, {\em Set-theoretic methods in control}.
\newblock Springer, 2008.

\bibitem{TaylorCDC22}
A.~J. Taylor, P.~Ong, T.~G. Molnar, and A.~D. Ames, ``Safe backstepping with
  control barrier functions,'' in {\em Proc. Conf. Decis. Control},
  pp.~5775--5782, 2022.

\bibitem{Krstic}
M.~Krsti\'{c}, I.~Kanellakopoulos, and P.~Kokotovi\'{c}, {\em Nonlinear and
  adaptive control design}.
\newblock John Wiley \& Sons, 1995.

\bibitem{AmesRAL22}
T.~G. Molnar, R.~K. Cosner, A.~W. Singletary, W.~Ubellacker, and A.~D. Ames,
  ``Model-free safety-critical control for robotic systems,'' {\em IEEE Robot.
  Aut. Letters}, vol.~7, no.~2, pp.~944--951, 2022.

\bibitem{AmesACC23}
T.~G. Molnar and A.~D. Ames, ``Safety-critical control with bounded inputs via
  reduced order models,'' in {\em Proc. Amer. Control Conf.}, pp.~1414--1421,
  2023.

\bibitem{SontagSCL89}
E.~D. Sontag, ``A universal construction of {A}rtstein's theorem on nonlinear
  stabilization,'' {\em Syst. Control Lett.}, vol.~13, pp.~117--123, 1989.

\bibitem{AmesLCSS19}
S.~Kolathaya and A.~D. Ames, ``Input-to-state safety with control barrier
  functions,'' {\em IEEE Contr. Syst. Lett.}, vol.~3, no.~1, 2019.

\bibitem{KrsticTAC23}
M.~Krstic, ``Inverse optimal safety filters,'' {\em IEEE Trans. Autom.
  Control}, 2023.

\bibitem{SunACC23}
M.~Li and Z.~Sun, ``A graphical interpretation and universal formula for safe
  stabilization,'' in {\em Proc. Amer. Control Conf.}, 2023.

\bibitem{PolArXiV23}
P.~Mestres and J.~Cortes, ``Feasibility and regularity analysis of safe
  stabilizing controllers under uncertainty,'' {\em arXiv preprint
  arXiv:2301.04603}, 2023.

\bibitem{OngCDC19}
P.~Ong and J.~Cortes, ``Universal formula for smooth safe stabilization,'' in
  {\em Proc. Conf. Decis. Control}, pp.~2373--2378, 2019.

\bibitem{JayawardhanaAutomatica16}
M.~Romdlony and B.~Jayawardhana, ``Stabilization with guaranteed safety using
  control {L}yapunov–barrier function,'' {\em Automatica}, vol.~66,
  pp.~39--47, 2016.

\bibitem{Loomis}
L.~H. Loomis and S.~Sternberg, {\em Advanced Calculus}.
\newblock Addison-Wesley, 1968.

\bibitem{Squareplus}
J.~T. Barron, ``Squareplus: A softplus-like algebraic rectifier,'' {\em arXiv
  preprint arXiv:2112.11687}, 2021.

\end{thebibliography}

\end{document}